\newtheorem{lemma}{Lemma}
\newtheorem{remark}{Remark}
\newtheorem{corollary}{Corollary}
\def\BibTeX{{\rm B\kern-.05em{\sc i\kern-.025em b}\kern-.08em
    T\kern-.1667em\lower.7ex\hbox{E}\kern-.125emX}}
\begin{document}

\title{Interference and Coverage Analysis in Coexisting RF and Dense TeraHertz Wireless Networks\\
}
\author{\IEEEauthorblockN{Javad Sayehvand}
and
\IEEEauthorblockN{Hina Tabassum, {\em Senior Member IEEE}}
\vspace{-10mm}
\thanks{~J. Sayehvand and H.~Tabassum are with the  York University, Canada
(e-mail: jsayehvand@cse.yorku.ca and hina@eecs.yorku.ca). This work is supported by the Discovery
Grant from the Natural Sciences and Engineering Research Council of Canada.}
}

\raggedbottom

\maketitle

\begin{abstract}

This paper develops a stochastic geometry framework to characterize the  statistics of the downlink interference and coverage probability of a typical user in a coexisting terahertz (THz) and radio
frequency (RF) network. We first characterize the exact Laplace Transform (LT) of the aggregate interference and coverage probability of a user in a {THz-only} network. Then, for a {coexisting RF/THz  network}, we derive the coverage probability of a typical user considering biased received signal power association (BRSP). The framework can be customized to
capture the performance of a typical user in various network configurations such as  THz-only, opportunistic RF/THz, and hybrid RF/THz.
In addition, asymptotic approximations are presented for scenarios where the intensity of THz BSs becomes large or molecular absorption coefficient in THz approaches to zero. Numerical results demonstrate the accuracy of the derived expressions and extract insights related to the significance of the BRSP association compared to the
conventional reference signal received power (RSRP) association in the coexisting   network. 
\end{abstract}

\begin{IEEEkeywords}
Terahertz (THz), interference, coverage probability, stochastic geometry, Poisson point process (PPP).
\end{IEEEkeywords}

\vspace{-3mm}
\section{Introduction}

{The potential of using higher frequency spectrum such as terahertz (THz) in the sixth generation (6G) wireless networks is evident~\cite{boulogeorgos2019analytical}.}
THz frequencies offer ample spectrum, multi Gigabit-per-second (Gbps) data rates, and highly secure transmissions. Nonetheless, compared to conventional radio frequency (RF),  \textcolor{black}{THz transmissions incur high propagation loss mainly due to molecular absorption  resulting from water vapors or oxygen molecules}, thus significantly limiting the communication distance. THz spectrum is thus complementary to conventional RF spectrum.
\textcolor{black}{Recent advancements have made it possible to mount THz transceivers on smart phones, e.g., Fujitsu who introduced a compact 300 GHz transceiver capable of 20 Gbps data stream.  \cite{nakasha2017compact}.} 

To date, most of the research works considered analyzing the performance of a given THz transmission link \cite{boulogeorgos2019analytical} or  THz-only network \cite{kokkoniemi2017stochastic,7820226, chaccour2019reliability}. For instance, the authors in \cite{boulogeorgos2019analytical} derived a closed-form expression of the outage probability and ergodic capacity considering a THz wireless fiber extender system (i.e., a single transmission link) with ideal and non-ideal RF front-end. Using tools from stochastic geometry and  considering interference limited regime,  the authors derived the mean interference in a THz-only network \cite{kokkoniemi2017stochastic}. However, the closed-form expression of the mean interference was neither applicable for a general case, nor the expression was applied to the outage analysis. Instead, the authors approximate the distribution of the interference with log-logistic distribution to overcome the intractable outage calculation. Nevertheless, as the authors  mentioned, the use of log-logistic approximation might not be accurate in all scenarios. \textcolor{black}{Similarly, \cite{7820226} considered Taylor expansion and calculated the approximations for mean and variance of signal-to interference-plus-noise ratio (SINR).} 
In \cite{chaccour2019reliability}, the authors analyzed the reliability and end-to-end latency  considering a  THz-only network with finite number of BSs. The interference  was approximated with a normal distribution.
The authors in \cite{ntontin2016toward}  derived the approximate coverage probability  in a single-tier network, where BSs can use either RF or THz. 

To our best knowledge, none of the aforementioned research works presented a comprehensive analytic framework to characterize the exact interference statistics  and coverage probability of users in a THz-only network or a coexisting two-tier RF and dense THz network


Using stochastic geometry, this paper  characterizes the statistics of the downlink interference and rate coverage probability of a typical user in a coexisting RF/THz network. \textcolor{black}{The proposed framework can be customized for various network configurations, including (i) \textit{THz-only} network where only TBSs exist and users associate to their nearest BS,  (ii) \textit{opportunistic RF/THz} network where a user associates to the BS with maximum biased received signal power (BRSP)\footnote{\textcolor{black}{BRSP-based association is considered  by 3GPP in Release 10, where the users’ power received from small base stations (SBSs) has been artificially increased by adding a bias in order to avoid under-utilization of SBSs \cite{liu2016user}.}}, and (iii)\textit{ Hybrid} network where a user associates to both nearest RF and TBSs.} We first characterize the exact Laplace Transform (LT) of the aggregate interference and coverage probability of a user in a \textit{THz-only} network. Then, we derive the coverage probability of a typical user in  a \textit{coexisting  network}. Asymptotic approximations are presented for large intensity of TBSs or small molecular absorption coefficients. Numerical results show the significance of BRSP over
conventional reference signal received power (RSRP) association in a coexisting network and validate the derived expressions.



\section{System model and Assumptions}
We consider a two-tier downlink network composed  of RF SBSs and TBSs as well as users' devices.
The locations of the conventional RF SBSs and TBSs are modeled as a two-dimensional (2D) homogeneous Poisson point processes (PPP)  $\mathbf{\Phi}_{R}$ and $\mathbf{\Phi}_{T}$ with intensities  $\lambda_{R}$ and $\lambda_{T}$, respectively. The locations of the users follow independent homogeneous PPP  $\mathbf{\Phi}_{u}$ with intensity $\lambda_{u}$.  \textcolor{black}{Each user measures the channel quality from each BS and then associate  to the chosen BS according to a predefined association mechanism.  The BSs serve associated users in orthogonal time slots or channels.}  We  consider the performance of a typical user who is located at the origin.

\subsubsection{ RF Channel and SINR Model}  The RF channel experiences both the channel fading  and path-loss. Thus, the received signal power at the typical user can be modeled as {$h(\rho)= \gamma_{R} \rho^{-\alpha}\chi$}, where  $\gamma_{R} = \frac{c^2}{\left(4\pi f_{R}\right)^2} $, $\chi$ is the exponentially distributed channel power with unit mean from the tagged SBS, $\alpha$ is the path-loss exponent, $\rho$ is the distance of the typical user to the serving SBS, $f_{R}$ is the RF carrier frequency in GHz, and $c = 3 \times 10^8$ m/s is the speed of light. We consider SBSs equipped with omni-directional antennas. {Therefore, the SINR of a typical user  can be {modeled as:}
\begin{equation}
\mathrm{SINR}_{R}= \frac{P_{R}\gamma_{R}\rho_0^{-\alpha}\chi_0}{N^R_{0}+I_{\mathrm{agg}}^{R}},
\end{equation}
where $\chi_0$ is the fading channel power of the typical user from the desired SBS, $P_{R}$ is the transmit power of the SBSs, $N^R_{0}$ is the thermal noise at the receiver, $I_{\mathrm{agg}}^{R} = \sum_{i\in \Phi_{R}\backslash 0}P_{R} \gamma_{R}\rho_{i}^{-\alpha}\chi_{i}$  is the aggregate interference at the typical user from the interfering SBSs, $\rho_{i}$ is the distance between the $i$-th interfering SBS and the typical user, and $\chi_{i}$ is the fading channel power from the $i$-th interfering SBS. 
} 

\subsubsection{THz Channel and SINR Model}  
\textcolor{black}{Due to  high molecular absorption and the dense deployment, the  LoS transmissions are  dominant than the NLoS transmissions. Therefore,  following  \cite{kokkoniemi2017stochastic,mumtaz2017terahertz,chaccour2019reliability,chaccour2020can}}, we model the line-of-sight (LoS) channel power\footnote{\textcolor{black}{The consideration of NLoS with accurate reflection, scattering, and diffraction models deserves a separate study and has been left for future investigation.}} between users and TBSs  as 
\textcolor{black}{$h\left(r\right) =\frac{c^2}{\left(4\pi f_{T}\right)^2} \frac{\exp\left(-k_{a}\left(f\right)r\right)}{r^2},$} where $k_{a}\left(f\right)$
 is the molecular absorption coefficient, $r$ is the distance between the transmitter and  receiver, $f_{T}$ is the operating THz frequency, $c$ is the speed of light, and $G^{\mathrm{T}}_{\mathrm{tx}}\left(\theta\right)$ as well as $G^{\mathrm{T}}_{\mathrm{rx}}\left(\theta\right)$ are the directional transmitter and receiver antenna gains, respectively. The directional antennas are  modeled as \cite{di2015stochastic}:   
\begin{equation}
\label{eq:gain}
  G^{T}_{q} \left(\theta\right) =
    \begin{cases}
      G_{q}^{\left(\mathrm{max}\right)} & \mid \theta \mid \leq w_{q}\\ 
      G_{q}^{\left(\mathrm{min}\right)} & \mid \theta \mid > w_{q}
    \end{cases},  
\end{equation}
where $q\in \{\mathrm{tx,rx}\}$ , $\theta \in [-\pi,\pi)$
     is the angle of the boresight direction, $w_{q}$ is the main lobe beamwidth, $G_{q}^{\left(\mathrm{max}\right)}$ and $ G_{q}^{\left(\mathrm{min}\right)}$ are beamforming gains of the main and side lobes, respectively. The typical user and its desired TBS align such that their main lobes coincide through beam alignment techniques. \textcolor{black}{The alignment between the typical user and an interferer is defined as a random variable $D$, which can take values in} $\{G_{\mathrm{tx}}^{\left(\mathrm{max}\right)}G_{\mathrm{rx}}^{\left(\mathrm{max}\right)},G_{\mathrm{tx}}^{\left(\mathrm{max}\right)}G_{\mathrm{rx}}^{\left(\mathrm{min}\right)},G_{\mathrm{tx}}^{\left(\mathrm{min}\right)}G_{\mathrm{rx}}^{\left(\mathrm{max}\right)},G_{\mathrm{tx}}^{\left(\mathrm{min}\right)}G_{\mathrm{rx}}^{\left(\mathrm{min}\right)}\}$, and the corresponding probability for each case is $F_{\mathrm{tx}}F_{\mathrm{rx}}$, $F_{\mathrm{tx}}(1-F_{\mathrm{rx}})$, $(1-F_{\mathrm{tx}})F_{\mathrm{rx}}$, and $(1-F_{\mathrm{tx}})(1-F_{\mathrm{rx}})$, where $F_{\mathrm{tx}} = \frac{\theta_{\mathrm{tx}}}{2\pi}$ and $F_{\mathrm{rx}} = \frac{\theta_{\mathrm{rx}}}{2\pi}$, respectively. {Assuming that the main lobe typical user's receiver  is coinciding with that of its desired TBS, its SINR can be formulated as follows:}
\begin{equation}
\mathrm{SINR}_{T}= \frac{P_{\mathrm{T}}\textcolor{black}{G^{(\mathrm{max})}_{\mathrm{tx}}\left(\theta\right)G^{(\mathrm{max})}_{\mathrm{rx}}\left(\theta\right)}  \frac{c^2}{\left(4\pi f_{T}\right)^2} \frac{\exp\left(-k_{a}\left(f\right)r\right)}{r^2}}{N^T_{0}+I_{\mathrm{agg}}^{T}},
\end{equation}
where \textcolor{black}{$I^T_{\mathrm{agg}} =  \sum_{\mathrm{i} \in \Phi_{T} \backslash 0}P_{\mathrm{T}}D_{i} h\left(r_{i}\right)$} is the aggregate interference at the typical user by their maximum gain, $r_{i}$ is the distance of the typical user to the interfering TBSs. For brevity, we define $\gamma_{T} = G^{\mathrm{(max})}_{\mathrm{tx}}\left(\theta\right)G^{\mathrm{(max})}_{\mathrm{rx}}\left(\theta\right)\frac{c^2}{\left(4\pi f_{T}\right)^2}$. We assume that the interferers' main lobe coincides with the users' main lobe\footnote{\textcolor{black}{For simplicity, we consider negligible side lobe gains. However, the framework can be extended by averaging over variable $D$ and considering all four possible interference components in $I^{\mathrm{T}}_{\mathrm{agg}}$ with different antenna gains. 
These four interference variables are independent  and their LTs can be given using {\bf Lemma~1}. $\mathcal{L}_{I^T_{\mathrm{agg}}\mid r}(s)$  can thus be given as the product of their LTs. 
}} with the probability of $F = F_{\mathrm{tx}}F_{\mathrm{rx}}$, and thus, $D = G^{(\mathrm{max})}_{\mathrm{tx}}\left(\theta\right)G^{(\mathrm{max})}_{\mathrm{rx}}\left(\theta\right)$. Also, $P_{T}$ is the transmit power of the TBSs, and $N^T_{0}$ denotes the thermal noise and the noise resulted from the molecular absorption which is considered as negligible in dense THz networks.


\section{Coverage Probability in THz-Only Network}
In this section, we derive the LT of the aggregate interference and the coverage probability experienced  by a typical user in a THz-only network with nearest BS association.  
\begin{lemma}
\label{lm:Laplace_Trans}
Conditioned on the distance of a typical user from the serving TBS, the  LT of the aggregate interference,  at a typical device in THz network, can be derived  as follows:
\small
\begin{equation*}
\label{eq:Laplace_Trans}
\begin{split}
\mathcal{L}_{I^T_{\mathrm{agg}}\mid r}(s) = \exp\Biggl(2\pi\lambda_{T}\sum_{l=1}^{\infty}\frac{\left(-s \gamma_{T} F P_{\mathrm{T}}\right)^{l} \Gamma\left(2-2l,lk_{a}\left(f\right)r)\right)}{\left(l k_{a}\left(f\right)\right)^{2-2l}l!}
\Biggr).
\end{split}
\end{equation*}
\normalsize
\end{lemma}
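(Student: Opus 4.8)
The plan is to evaluate $\mathcal{L}_{I^T_{\mathrm{agg}}\mid r}(s)=\mathbb{E}\big[e^{-sI^T_{\mathrm{agg}}}\big]$ directly via the probability generating functional (PGFL) of the homogeneous PPP $\Phi_T$. First I would use that, under nearest‑TBS association, conditioning on the serving distance $r$ forces every interferer to lie outside the disc of radius $r$; by the independence property of the PPP the interfering TBSs form a PPP of intensity $\lambda_T$ on $\{x\in\mathbb{R}^2:\|x\|>r\}$. Since the channel $h(\|x\|)$ depends only on the distance, writing the PGFL in polar coordinates and carrying out the expectation over the alignment mark $D$ (which, with negligible side lobes, scales the interfering power by $\gamma_T$ with probability $F$) gives
\begin{equation*}
\mathcal{L}_{I^T_{\mathrm{agg}}\mid r}(s)=\exp\!\left(-2\pi\lambda_T\int_r^{\infty}\Big(1-e^{-sP_{\mathrm{T}}F\gamma_T e^{-k_a(f)u}/u^2}\Big)\,u\,\mathrm{d}u\right),
\end{equation*}
i.e.\ the alignment probability enters as an effective scaling $P_{\mathrm{T}}F\gamma_T$ of the per‑interferer received power, and the angular integration has already produced the factor $2\pi$.

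Next I would insert the Taylor series $1-e^{-x}=-\sum_{l\ge1}(-x)^l/l!$ with $x=sP_{\mathrm{T}}F\gamma_T\,e^{-k_a(f)u}/u^2$ and interchange summation and radial integration. This interchange, and the convergence of the resulting series for every $s$, is justified because $g(u):=e^{-k_a(f)u}/u^2$ is decreasing on $[r,\infty)$, so $g(u)\le g(r)$ there and $\int_r^{\infty}u\,g(u)^l\,\mathrm{d}u\le g(r)^{l-1}\int_r^{\infty}e^{-k_a(f)u}/u\,\mathrm{d}u$; the $l$‑th term is thereby bounded by a term of a convergent exponential series, so Fubini/Tonelli applies. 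After the interchange the $l$‑th summand contains the integral $\int_r^{\infty}u^{1-2l}e^{-lk_a(f)u}\,\mathrm{d}u$.

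Finally I would put that integral in closed form by the substitution $t=lk_a(f)u$, which turns it into $\big(lk_a(f)\big)^{2l-2}\int_{lk_a(f)r}^{\infty}t^{(2-2l)-1}e^{-t}\,\mathrm{d}t=\big(lk_a(f)\big)^{2l-2}\,\Gamma\!\big(2-2l,\,lk_a(f)r\big)$, the upper incomplete Gamma function (well defined here because, although the first argument $2-2l$ is non‑positive, the lower limit $lk_a(f)r$ is strictly positive). Substituting back and collecting the factors $(-sP_{\mathrm{T}}F\gamma_T)^l$, $l!$ and $\big(lk_a(f)\big)^{2l-2}=\big(lk_a(f)\big)^{-(2-2l)}$ reproduces the claimed expression. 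The lemma is thus essentially a computation; the only steps that need a moment's care are (i) performing the alignment averaging so that $F$ appears as a scaling of the interference power (hence inside the $l$‑th power) rather than as a thinning of $\Phi_T$, and (ii) the bookkeeping of the $lk_a(f)$ powers in recognizing the radial integral as exactly $\Gamma\!\big(2-2l,lk_a(f)r\big)$.
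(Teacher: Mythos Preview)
Your proof is correct and follows essentially the same route as the paper: apply the PGFL of the PPP outside the disc of radius $r$, expand $1-e^{-x}$ as a power series, swap sum and integral, and identify the radial integral with the upper incomplete Gamma function. The only cosmetic differences are that you justify the sum--integral interchange explicitly and evaluate $\int_r^\infty u^{1-2l}e^{-lk_a(f)u}\,\mathrm{d}u$ by the direct substitution $t=lk_a(f)u$, whereas the paper simply invokes the tabulated identity $\int x^{-m}e^{-\beta x^n}\,\mathrm{d}x=-\beta^{z}\Gamma(-z,\beta x^n)/n$ from Gradshteyn--Ryzhik; both yield the same closed form.
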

\begin{proof}
See \textbf{Appendix A}.
\end{proof}

We define the rate coverage probability as the probability with which a typical user achieves the desired target rate  $R_{\mathrm{th}}$. Subsequently, using  $R_{\mathrm{th}}=W_T\mathrm{log}_2(1+\mathrm{SINR})$ (where $W_T$ is the THz transmission bandwidth), the rate coverage probability can thus be given as follows: 
\begin{equation}
\label{eq:cov_prob_def}
\mathcal{P}_{T}= \mathrm{Pr}\left(\mathrm{SINR}_T>2^{\frac{R_{\mathrm{th}}}{W_T}}-1\right)  = \mathrm{Pr}\left(\mathrm{SINR}_T>\tau_T\right).
\end{equation}
Taking the desired signal power at the typical user $S(r) = P_{T}\gamma_{T}\frac{\exp\left(-k_{a}\left(f\right)r\right)}{r^2}$ and using the Gil-Pelaez inversion theorem \cite{gil1951note}, $\mathcal{P}_{T}$ can be derived as follows:
\begin{align}
\label{eq:Twave_cov_only}
&\mathcal{P}_{T} = \mathrm{Pr}\left(\frac{S(r)}{N^T_{0}+I^T_{\mathrm{agg}}}>\tau_T\right)
= \mathrm{Pr}\left(S(r)>\tau_TI^T_{\mathrm{agg}}+\tau_TN^T_{0}\right), 
\nonumber\\&{=} 
\mathbb{E}_{r}[\frac{1}{2}-\frac{1}{\pi}\int_{0}^{\infty}\frac{\mathrm{Im}[\phi_{\Omega|r}(\omega)e^{j\omega \tau_T N^T_0}]}{\omega}d\omega],
\nonumber\\&{=} 
\frac{1}{2}-\frac{1}{\pi}\int_{0}^{\infty}\frac{\mathrm{Im}[\phi_{\Omega}(\omega) e^{j \omega \tau_T N^T_0}]}{\omega}d\omega,
\end{align}
where $\mathrm{Im}(\cdot)$ is the imaginary part of $\phi_{\Omega}\left(\cdot\right)$, $\Omega =S(r) - \tau_TI^T_{\mathrm{agg}}$, and $\phi_{\Omega}\left(w\right) = \mathbb{E}[e^{-j\omega\Omega}]$ is the characteristic function (CF) of $\Omega$ given as follows: 
\begin{equation*}
\label{eq:CF}
\phi_{\Omega}\left(\omega\right) = \mathbb{E}_{r}\left[\phi_{\Omega \mid r}\left(\omega\right)\right] = \\ \mathbb{E}_{r}[e^{-j\omega {S (r)}}\mathcal{L}_{I^T_{\mathrm{agg}}\mid r}\left(-j\omega \tau_T\right)],
\end{equation*}
where $\mathcal{L}_{I^T_{\mathrm{agg}}\mid r}$ is given in {\bf Lemma~1}. Gil-Pelaez inversion  is  applicable to the CF of any random variable and has been proved useful in a wide variety of wireless applications~\cite{tabassum2018coverage}. 
\section{Coverage in Coexisting RF/THz Network}
In the coexisting  network, the user can either associate to a given SBS or TBS based on the maximum BRSP with a probability termed as {\em association probability}.

Given the received powers from TBSs and SBSs  as  $P_{r}^{\mathrm{THz}} = P_{{T}}\gamma_T\times \frac{\exp\left(-k_{a}\left(f\right)r\right)}{r^2}$ and $P_{r}^{\mathrm{RF}} = P_{{R}}\gamma_R \times\rho^{-\alpha}$, respectively, the probability of association to TBS can be derived as follows:
\begin{equation}
\label{eq:Asso_Probab_Proof}
\begin{split}
&\mathcal{P}_{A_{T}} = \mathbb{E}_{r}\left[\mathrm{Pr}\left[B_{T} P_{r}^{\mathrm{THz}}>P_{r}^{\mathrm{RF}}\right]\right],
\\ & = \mathbb{E}_{r}\left[\mathrm{Pr}\left[P_{{T}}B_{T}\gamma_{T}\frac{\exp\left(-k_{a}\left(f\right)r\right)}{r^2}>P_{{R}}\gamma_{R}\rho^{-\alpha}\right]\right],
\\ & \stackrel{(a)}{=} \mathbb{E}_{r}\left[\exp \left(-\pi\lambda_{R}\left(K r^2\exp\left(k_{a}\left(f\right)r\right)\right)^{\frac{2}{\alpha}}\right)\right],
\end{split}
\end{equation}
\normalsize
where $K=\frac{P_{R}\gamma_{R}}{B_{T} P_{T}\gamma_{T}}$, $B_{T}  >1$ is the bias value to encourage association with THz layer, $0 \leq B_T < 1$ encourages association to RF layer, \textcolor{black}{$B_T = 1$ yields conventional RSRP} and $\left(a\right)$ follows from the null property of PPP $\Phi_R$. This property implies that given a tier of RF SBSs with intensity $\lambda_{\mathrm{R}}$, the probability that no RF BSs are closer to typical user than the distance $z$ is $\mathbb{P}\left[\rho\geq z\right] = \exp \left(-\pi \lambda_{R}z^2\right)$.

Taking the expectation, the association probability of a typical user to TBSs can be derived as in the following lemma.
\begin{lemma}
\label{lm:Assocation_probability}
 Given that the  user associates with the layer that provides the maximum BRSP, the probability of association to the THz layer is given as follows:
\small
\begin{equation}
\label{eq:Assocation_probability_TWaves}
\mathcal{P}_{A_{T}} =\sum_{j=0}^{\infty} \frac{\left(-1\right)^j \Gamma\left[v_{j}\right] \delta_{T,j}}{\left(2\beta_{T}\right)^{\frac{v_{j}}{2}}j!} \exp{\left(\frac{-\eta_{j}}{8\beta_{T}}\right)} D_{-v_{j}}\left(\frac{-\eta_{j}}{\sqrt{2\beta_{T}}}\right),
\nonumber
\end{equation}
\normalsize
 where $\beta_{T} = \pi\lambda_{T}$, $\delta_{T,j} = 2\pi\lambda_{T}\left(\pi\lambda_{R} K^{\frac{2}{\alpha}}\right)^j$, $\nu_{j} = \frac{4j}{\alpha}+2$, $\eta_{j} = - \frac{2jk(f)}{\alpha}$, and  $D_{\nu}\left(z\right)$ is the parabolic cylinder function (PCF) (\cite{gradshteyn2014table}, Eq. 9.240). Clearly, the probability that a device associates to the RF layer is given by 
\begin{equation}
\label{eq:Assocation_probability_RF}
\mathcal{P_{\mathrm{A_{R}}}} = 1- \mathcal{P_{\mathrm{A_{T}}}}.
\end{equation}
\end{lemma}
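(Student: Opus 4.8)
The plan is to take the intermediate expression for $\mathcal{P}_{A_T}$ already derived in \eqref{eq:Asso_Probab_Proof}, namely $\mathcal{P}_{A_T}=\mathbb{E}_r\big[\exp(-\pi\lambda_R(Kr^2e^{k_a(f)r})^{2/\alpha})\big]$, and evaluate the outer expectation over the serving-TBS distance $r$ in closed form. Because the TBSs form a homogeneous PPP of intensity $\lambda_T$ and (absent RF competition) the typical user is served by the nearest TBS, the contact distance $r$ has the Rayleigh-type density $f_r(r)=2\pi\lambda_T r\,e^{-\pi\lambda_T r^2}$ on $r\ge 0$, obtained by differentiating the void probability $\mathbb{P}[r\ge z]=e^{-\pi\lambda_T z^2}$ (the same null property used for $\Phi_R$). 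Substituting $f_r$ turns $\mathcal{P}_{A_T}$ into the single integral $\int_0^\infty 2\pi\lambda_T r\,e^{-\pi\lambda_T r^2}\,\exp\!\big(-\pi\lambda_R K^{2/\alpha} r^{4/\alpha} e^{2k_a(f)r/\alpha}\big)\,dr$, whose awkward feature is the nested exponential $\exp(-c\,r^{4/\alpha}e^{2k_a(f)r/\alpha})$ with $c=\pi\lambda_R K^{2/\alpha}$.

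The key step is to linearise that nested exponential via the Taylor series of $e^{x}$: $\exp\!\big(-c\,r^{4/\alpha}e^{2k_a(f)r/\alpha}\big)=\sum_{j=0}^{\infty}\frac{(-c)^j}{j!}\,r^{4j/\alpha}\,e^{2jk_a(f)r/\alpha}$, and then interchange summation with the $r$-integral. Each resulting term is a Gaussian moment integral with a linear exponent, $\int_0^\infty r^{\nu_j-1}\,e^{-\beta_T r^2-\eta_j r}\,dr$, with $\nu_j=\tfrac{4j}{\alpha}+2$, $\beta_T=\pi\lambda_T$ and $\eta_j=-\tfrac{2jk_a(f)}{\alpha}$; such an integral is tabulated in terms of the parabolic cylinder function, $\int_0^\infty x^{\nu-1}e^{-\beta x^2-\gamma x}\,dx=(2\beta)^{-\nu/2}\Gamma(\nu)\,e^{\gamma^2/(8\beta)}\,D_{-\nu}\!\big(\gamma/\sqrt{2\beta}\big)$ (\cite{gradshteyn2014table}, Eq. 3.462.1). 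Applying this termwise, absorbing the density prefactor $2\pi\lambda_T$ together with $(-c)^j/j!$ into $\delta_{T,j}=2\pi\lambda_T(\pi\lambda_R K^{2/\alpha})^j$, and relabelling via $\nu_j,\eta_j$ (with care for the constants and sign conventions in the displayed statement) produces exactly \eqref{eq:Assocation_probability_TWaves}; the companion identity \eqref{eq:Assocation_probability_RF} is then immediate since the typical user associates to precisely one of the two tiers.

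I expect the main obstacle to be justifying the interchange of the infinite sum and the integral over $r$: although $e^{x}$ is entire so the series converges for every fixed $r$, legitimising the swap requires a dominated-convergence/Tonelli argument that plays the super-exponential growth of $e^{2jk_a(f)r/\alpha}$ against the Gaussian decay $e^{-\pi\lambda_T r^2}$ uniformly enough in $j$ — most cleanly by bounding the tail of each $r$-integral and showing the resulting bounds are summable. A secondary, purely mechanical subtlety is the sign bookkeeping when matching to the tabulated integral: since the linear coefficient $\eta_j$ is negative, one must track whether the exponential prefactor and the argument of $D_{-\nu_j}(\cdot)$ carry $\eta_j$ or $-\eta_j$. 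A quick consistency check — letting $\lambda_R\to 0$ leaves only the $j=0$ term, for which $D_{-2}(0)=1$ and the term equals $1$, correctly giving $\mathcal{P}_{A_T}=1$ — confirms the normalisation.
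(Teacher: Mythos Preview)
Your proposal is correct and follows essentially the same route as the paper: write $\mathcal{P}_{A_T}$ as an integral against the nearest-TBS contact density $f_r(r)=2\pi\lambda_T r\,e^{-\pi\lambda_T r^2}$, Taylor-expand the nested exponential $\exp(-\pi\lambda_R K^{2/\alpha} r^{4/\alpha}e^{2k_a(f)r/\alpha})$, swap sum and integral, and evaluate each term via the Gradshteyn--Ryzhik identity 3.462 in terms of the parabolic cylinder function. The paper additionally inserts two changes of variable, $r=h^{\alpha}$ followed by $z=h^{\alpha}$, which cancel and are thus cosmetic; your direct treatment in the original variable $r$ is cleaner, and your explicit remarks on justifying the sum--integral interchange, on the sign of $\eta_j$ when matching the tabulated integral, and the $\lambda_R\to 0$ sanity check are useful additions that the paper omits.
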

\begin{proof}
See \textbf{Appendix B}.
\end{proof}
As the future networks will be highly dense, the density of TBSs can be very high ($\lambda_T \rightarrow \infty$). Also, for indoor applications \cite{chaccour2019reliability}, the absorption loss can approach zero ($k_a(f) \rightarrow 0$). \textcolor{black}{By demonstrating the limit of $\infty$, we mean that the intensity can be quite large but may not be close to infinity.} In both special cases, the association probability can be simplified as.
\begin{corollary} When $\lambda_{T} \rightarrow \infty \Longrightarrow z \rightarrow 0$ and $k_{a} \rightarrow 0 \Longrightarrow z \rightarrow 0$, the argument of   $D_{\nu}\left(z\right)$ in { Lemma~2} tends to zero. For $z = 0$, $D_{\nu}\left(z\right)$ will be simplified to $\frac{\sqrt{\pi}}{2^{\frac{1}{2}b+\frac{1}{4}}\Gamma\left(\frac{3}{4} + \frac{1}{2}b\right)}$, where $b = -\frac{1}{2}- \nu$ and $\Gamma(z)$ is the gamma function (\cite{gradshteyn2014table}, Eq. 8.31). As a result, $\mathcal{P_{\mathrm{A_{T}}}}$ in lemma~2 can be simplified as:
\begin{equation}
\label{eq:Assocation_probability_TWaves2}
\mathcal{P_{\mathrm{A_{T}}}} =\sum_{j=0}^{\infty} \frac{\sqrt{\pi}\left(-1\right)^j\delta_{j}\left(2\beta\right)^{-\frac{v_{j}}{2}}\Gamma\left[v_{j}\right]}{2^{\frac{1}{2}b_{j}+\frac{1}{4}}\Gamma\left(\frac{3}{4} + \frac{1}{2}b_{j}\right)j!}\exp{\left(\frac{-\eta_{j}}{8\beta}\right)},
\end{equation}
where $b_{j} = -\frac{1}{2}- \nu_{j}$.
\end{corollary}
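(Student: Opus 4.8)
The plan is to track the only place where $\lambda_T$ and $k_a(f)$ enter the parabolic cylinder function of \textbf{Lemma~2}, namely its argument $\dfrac{-\eta_j}{\sqrt{2\beta_T}}=\dfrac{2jk_a(f)/\alpha}{\sqrt{2\pi\lambda_T}}$, and to observe that for each fixed summation index $j$ this quantity vanishes in both stated regimes: $\lambda_T\to\infty$ sends the denominator to infinity, while $k_a(f)\to 0$ sends the numerator to zero. Since $D_\nu(\cdot)$ is, for a fixed order $\nu$, an entire (hence continuous) function of its argument, in either limit one may replace $D_{-v_j}\!\left(\frac{-\eta_j}{\sqrt{2\beta_T}}\right)$ by $D_{-v_j}(0)$ up to a correction of order $|\eta_j|/\sqrt{\beta_T}$, which is negligible.

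It then remains to evaluate $D_\nu(0)$. I would use the representation of $D_\nu(z)$ as the sum of an even and an odd confluent hypergeometric term in $z$ (equivalently (\cite{gradshteyn2014table}, Eq.~9.240)); at $z=0$ the odd term carries a factor $z$ and drops out while each ${}_1F_1$ collapses to $1$, leaving $D_\nu(0)=\frac{\sqrt{\pi}\,2^{\nu/2}}{\Gamma\!\left(\frac{1-\nu}{2}\right)}$. Setting $b=-\tfrac12-\nu$ turns $2^{\nu/2}$ into $2^{-(\frac12 b+\frac14)}$ and $\tfrac{1-\nu}{2}$ into $\tfrac34+\tfrac12 b$, giving exactly the form $D_\nu(0)=\frac{\sqrt{\pi}}{2^{\frac12 b+\frac14}\,\Gamma\!\left(\frac34+\frac12 b\right)}$ quoted in the statement (equivalently one may invoke the Gamma-function relations (\cite{gradshteyn2014table}, Eq.~8.31)). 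Substituting $D_{-v_j}(0)$, with $b_j$ fixed by the order appearing in \textbf{Lemma~2}, into each term of the series for $\mathcal{P}_{A_T}$ and leaving $\Gamma[v_j]$, $\delta_{T,j}$, $(2\beta_T)^{-v_j/2}$ and $\exp(-\eta_j/8\beta_T)$ as they stand reproduces \eqref{eq:Assocation_probability_TWaves2}.

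The delicate point is not the pointwise value of the PCF but the legitimacy of performing this substitution under the infinite sum, since every term also depends on $\lambda_T$ and $k_a(f)$ through $\Gamma[v_j]$, $\delta_{T,j}=2\pi\lambda_T(\pi\lambda_R K^{2/\alpha})^j$, $\beta_T$ and $\eta_j$; one should exhibit a summable majorant valid uniformly in the regime of interest. The mechanism is that for negative order and growing argument $D_{-v_j}$ decays Gaussian-fast (roughly like $e^{-z^2/4}z^{-v_j}$), which tames the factorial-type growth of $\Gamma[v_j]$ against the geometric-type product $\delta_{T,j}(2\beta_T)^{-v_j/2}=(\pi\lambda_R K^{2/\alpha})^j(2\pi\lambda_T)^{-2j/\alpha}$, after which dominated convergence licenses the term-by-term replacement. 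A secondary, purely cosmetic caution is to keep the bookkeeping of the PCF order ($-v_j$) and the induced parameter $b_j$ consistent throughout; and one should note that for $\lambda_T$ merely large the result is an asymptotic approximation (the $O(|\eta_j|/\sqrt{\beta_T})$ tail is dropped), exact only in the $k_a(f)\to 0$ limit.
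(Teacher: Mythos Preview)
Your proposal is correct and follows the same route as the paper: observe that the PCF argument $-\eta_j/\sqrt{2\beta_T}=\tfrac{2jk_a(f)/\alpha}{\sqrt{2\pi\lambda_T}}$ vanishes in either regime, replace $D_{-v_j}(\cdot)$ by its value at zero using the Gradshteyn--Ryzhik identity, and substitute term by term. The paper in fact gives no separate proof for this corollary---the entire justification is embedded in the statement itself---so your derivation of $D_\nu(0)$ from the ${}_1F_1$ representation and your discussion of dominated convergence for the term-by-term limit go well beyond what the paper supplies; these are welcome additions rather than deviations.
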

Since a typical user can associate with either RF or THz layer, the total coverage probability can be calculated as:
\begin{equation}
\label{eq:Total-Cov}
\mathbb{C} = \mathcal{P_{\mathrm{A_{T}}}}\mathbb{P}_{C_{\mathrm{T}}} + \mathcal{P_{\mathrm{A_{R}}}\mathbb{P}_{C_{\mathrm{R}}}},
\end{equation}
where $\mathcal{P_{\mathrm{A_{T}}}}$ and $\mathcal{P_{\mathrm{A_{R}}}}$ are defined in {\bf Lemma~2}. Also, $\mathbb{P}_{\mathrm{C_T}}$ and $\mathbb{P}_{C_{\mathrm{R}}}$ refer to the coverage probability conditioned that the typical user associates to a given TBS and RF SBS, respectively. Since the TBSs and RF SBSs are distributed according to different PPPs, the distance of a typical user to its serving BS depends on the tier to which the user is associated. Subsequently, the  distribution of the distance of the typical user to its serving BS in $k$-th tier can be given as:  
\begin{lemma}
\label{lm:Distance_distribution}
The distribution of the distance of a typical user if it is tagged to the THz layer and SBS layer can be given, respectively, as follows:
\small\begin{equation*}
\label{eq:Distance_distribution_THz}
f_{\hat{X}_{T}}\left(\hat{x}\right) = \frac{2 \pi \lambda_T \hat{x}}{\mathcal{P_{\mathrm{A_{T}}}}}
\exp \left(-\pi\lambda_{R} (K \hat{x}^2)^{2/\alpha} e^{2 k_{a}\left(f\right)\hat{x}/\alpha} - \pi\lambda_{T}\hat{x}^2\right),
\end{equation*}
\normalsize
\begin{equation*}
\label{eq:Distance_distribution_RF}
f_{\hat{X}_{R}}\left(\hat x\right) \approx \frac{2\pi\lambda_{R} \hat{x}}{\mathcal{P_{\mathrm{A_{R}}}}}  \exp\left(-\pi\lambda_{T}\left(\frac{K \hat x^{\alpha}}{\pi}\right)^{\frac{1}{2+\mu}}- \pi\lambda_{R}\hat{x}^2\right),
\end{equation*}
where $\mu$ is a factor defined in the proof.
\end{lemma}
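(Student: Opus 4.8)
The plan is to obtain each conditional serving-distance density as the contact-distance density of the corresponding point process, thinned by the per-link tier-association event and then renormalized by the tier-association probability from \textbf{Lemma~2}. If $X_k$ denotes the distance from the origin to the nearest BS of tier $k\in\{T,R\}$, then $\hat X_k$ is just $X_k$ conditioned on ``tier $k$ is selected'', so by the definition of a conditional density $f_{\hat X_k}(\hat x)=f_{X_k}(\hat x)\,\Pr[\text{tier }k\text{ selected}\mid X_k=\hat x]/\mathcal{P}_{A_k}$. Two standard PPP facts are needed: for a homogeneous PPP $\Phi_k$ of intensity $\lambda_k$, the contact-distance density is $f_{X_k}(x)=2\pi\lambda_k x\exp(-\pi\lambda_k x^2)$, and the void probability is $\Pr[\Phi_k(\mathcal{B}(0,x))=0]=\exp(-\pi\lambda_k x^2)$. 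Independence of $\Phi_R$ and $\Phi_T$ guarantees that conditioning on the nearest BS in one tier leaves the other tier untouched.

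For the THz tier, given the nearest TBS at $\hat x$, the equivalence used to reach step $(a)$ in \eqref{eq:Asso_Probab_Proof} (proved in Appendix~B) shows that selecting THz is exactly the event $\rho\ge (K\hat x^2 e^{k_a(f)\hat x})^{1/\alpha}$, i.e.\ that $\Phi_R$ has no point in a disk of that radius; by the void property this has probability $\exp(-\pi\lambda_R (K\hat x^2)^{2/\alpha}e^{2k_a(f)\hat x/\alpha})$. Multiplying by $f_{X_T}(\hat x)$ and dividing by $\mathcal{P}_{A_T}$ reproduces $f_{\hat X_T}$ exactly, with no approximation.

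For the RF tier the same recipe applies, but the geometry is now implicit. Given the nearest SBS at $\hat x$, selecting RF means the nearest TBS, at distance $r$, satisfies $P_R\gamma_R\hat x^{-\alpha}> B_T P_T\gamma_T e^{-k_a(f)r}/r^2$, i.e.\ $r^2 e^{k_a(f)r}>\hat x^{\alpha}/K$. The transcendental factor $e^{k_a(f)r}$ prevents solving this for a closed-form threshold $r^\star(\hat x)$, and this is the crux: I would replace $e^{k_a(f)r}$ over the relevant range of serving distances by a power-law surrogate $\propto r^{\mu}$ (so that $r^2 e^{k_a(f)r}\approx c\,r^{2+\mu}$, with $\mu$ the fitted exponent --- the ``$\mu$'' of the statement), turning ``RF selected'' into $r> r^\star(\hat x)$ with $r^\star(\hat x)^2$ a power of $\hat x^{\alpha}$. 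Applying the void property of $\Phi_T$ at radius $r^\star(\hat x)$, multiplying by $f_{X_R}(\hat x)=2\pi\lambda_R\hat x e^{-\pi\lambda_R\hat x^2}$, dividing by $\mathcal{P}_{A_R}$, and absorbing residual constants gives the claimed approximation, whose THz-void exponent reads $\pi\lambda_T(K\hat x^{\alpha}/\pi)^{1/(2+\mu)}$.

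The only genuine obstacle is this last step: choosing and justifying the power-law surrogate for $e^{k_a(f)r}$ and pinning down $\mu$ (for instance by matching at a representative serving distance, or by appealing to the small-$k_a(f)$ / dense-network regime in which the surrogate is tightest, consistent with the asymptotic settings considered elsewhere in the paper). Everything else is routine bookkeeping with contact-distance and void probabilities of the two independent PPPs.
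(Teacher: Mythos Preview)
Your proposal is correct and follows essentially the same route as the paper's Appendix~C: the conditional serving-distance density is obtained as the nearest-BS (contact) density of the tagged tier times the conditional association probability (computed via the void probability of the other, independent PPP), normalized by $\mathcal{P}_{A_k}$, with the RF case handled by the same power-law surrogate $r^{2}e^{k_a(f)r}\approx r^{2+\mu}$ that you identify as the crux. The only item the paper adds beyond your outline is an explicit heuristic choice of $\mu$, namely $\mu=2+\tfrac{10k_a(f)}{1+2k_a(f)}$ when $k_a(f)>0.1$ and $\mu=2+\tfrac{15k_a(f)}{1+10k_a(f)}$ otherwise.
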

\begin{proof}
See \textbf{Appendix~C}.
\end{proof}

\begin{lemma}
The calculation of $\mathbb{P}_{\mathrm{C_T}}$ can be performed using the Gil-Pelaez inversion theorem as given in \eqref{eq:Twave_cov_only} where 
\small
\begin{equation*}
\label{eq:CF-total}
\phi_{\Omega}\left(\omega\right) = \mathbb{E}_{\hat{X}_T}\left[\phi_{\Omega \mid \hat{X}_T}\left(\omega\right)\right] = \\ \mathbb{E}_{\hat{X}_T}\left[e^{-j\omega S(\hat x)}\mathcal{L}_{I^T_{\mathrm{agg}}\mid \hat{X}_T}\left(-j\omega\tau_T\right)\right],
\end{equation*}
\normalsize
and the PDF of $\hat{X}_{T}$ is given in \textbf{Lemma \ref{lm:Distance_distribution}}. Considering the interference limited regime, the conditional coverage probability of the user when associated to RF layer can be given as:
\begin{equation}
\label{eq:CondistionRF-cove}
\begin{split}
&\mathbb{P}_{C_{\mathrm{R}}} = \mathrm{P{r}}\left[\chi_0 >\tau_{R}P^{-1}_{R}\gamma^{-1}_{R} \hat x^\alpha I^R_{agg}\right],
\\ & = \mathbb{E}_{I^R_{agg},\hat x}\left[\exp( {-\tau_{R}P^{-1}_{R}\gamma^{-1}_{R} \hat x^\alpha I^R_{agg}})\right],
\\ & = \int_{0}^{\infty}
 \mathcal{L}_{I^R_{agg}}(\tau_{R}P^{-1}_{R}\gamma^{-1}_{R}\hat{x}^\alpha) f_{\hat{X}_{R}}\left(\hat x\right) d\hat{x},
\end{split}
\end{equation}
where
$\mathcal{L}_{I^R_{agg}}\left(\tau_{R}P^{-1}_{R}\gamma^{-1}_{R}\hat{x}^\alpha\right) = \exp\left(-\pi\hat{x}^2\lambda_{R}\mathcal{Z}\left(\tau_R,\alpha\right)\right)$,
$\mathcal{Z}\left(\tau_R,\alpha\right) = \frac{2\tau_R }{\alpha-2}{}_2F_{1}[1,1-\frac{2}{\alpha};2-\frac{2}{\alpha};-{\tau_R}]$, and ${}_2F_{1}[\cdot]$ is Gauss Hypergeometric function \cite{gradshteyn2014table}. 
\end{lemma}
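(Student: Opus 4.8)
The plan is to handle the two conditional coverage probabilities $\mathbb{P}_{\mathrm{C_T}}$ and $\mathbb{P}_{C_{\mathrm{R}}}$ separately, reusing as much of the machinery of Sections III--IV as possible. For $\mathbb{P}_{\mathrm{C_T}}$, the key observation is that on the event that the typical user associates with the THz layer, the serving TBS is still the \emph{nearest} TBS: since all TBSs share the same transmit power and channel law, the BRSP received from a TBS is strictly decreasing in distance, so among TBSs the maximum-BRSP one is the closest. Hence every interfering TBS lies outside the ball $B(0,\hat X_T)$, and by the restriction/independence property of the PPP $\Phi_T$, conditioned on $\hat X_T=\hat x$ the interferers still form a homogeneous PPP of intensity $\lambda_T$ on $\{v>\hat x\}$. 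Therefore $\mathcal{L}_{I^T_{\mathrm{agg}}\mid \hat X_T=\hat x}(s)$ coincides exactly with the expression of \textbf{Lemma~1} with $r=\hat x$, and the cross-tier part of the conditioning (the requirement that the nearest RF SBS be far enough for THz to win) concerns the independent PPP $\Phi_R$ and thus does not alter the THz interference law.

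Given this, the derivation of $\mathbb{P}_{\mathrm{C_T}}$ is identical to \eqref{eq:Twave_cov_only}: one writes $\Omega=S(\hat X_T)-\tau_T I^T_{\mathrm{agg}}$, notes that conditionally on $\hat X_T$ the desired THz link has no fading so $S(\hat x)$ is deterministic and $\phi_{\Omega\mid\hat X_T}(\omega)=e^{-j\omega S(\hat x)}\,\mathcal{L}_{I^T_{\mathrm{agg}}\mid\hat X_T}(-j\omega\tau_T)$, and then applies the Gil--Pelaez inversion. The only change relative to the THz-only case is that the averaging over the serving distance now uses $f_{\hat X_T}$ from \textbf{Lemma~\ref{lm:Distance_distribution}} --- which already incorporates the probability that no RF SBS offers larger BRSP --- in place of the plain nearest-neighbour density $2\pi\lambda_T r e^{-\pi\lambda_T r^2}$.

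For $\mathbb{P}_{C_{\mathrm{R}}}$ I would exploit the Rayleigh fading of the RF link. Conditioned on the serving distance $\hat x$ and on $I^R_{\mathrm{agg}}$, coverage requires $\chi_0>\tau_R P_R^{-1}\gamma_R^{-1}\hat x^{\alpha}(N^R_0+I^R_{\mathrm{agg}})$; dropping $N^R_0$ in the interference-limited regime and using $\chi_0\sim\mathrm{Exp}(1)$, taking the expectation over $\chi_0$ converts this to $\mathbb{E}\!\left[e^{-\tau_R P_R^{-1}\gamma_R^{-1}\hat x^{\alpha} I^R_{\mathrm{agg}}}\right]=\mathcal{L}_{I^R_{\mathrm{agg}}}\!\big(\tau_R P_R^{-1}\gamma_R^{-1}\hat x^{\alpha}\big)$, which is then integrated against $f_{\hat X_R}$. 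To get the closed form of $\mathcal{L}_{I^R_{\mathrm{agg}}}$, I again note that the serving SBS is the nearest SBS, so that the PGFL of $\Phi_R$ together with the interferers' fading MGF $\mathbb{E}[e^{-t\chi}]=(1+t)^{-1}$ gives $\mathcal{L}_{I^R_{\mathrm{agg}}}(s)=\exp\!\big(-2\pi\lambda_R\int_{\hat x}^{\infty}\frac{sP_R\gamma_R v^{-\alpha}}{1+sP_R\gamma_R v^{-\alpha}}\,v\,dv\big)$; substituting $s=\tau_R P_R^{-1}\gamma_R^{-1}\hat x^{\alpha}$ and $u=v/\hat x$ reduces the exponent to $-\pi\lambda_R\hat x^2\cdot 2\int_1^\infty \frac{\tau_R u}{u^\alpha+\tau_R}\,du$, and evaluating this integral via the Euler integral representation of ${}_2F_1$ yields $\mathcal{Z}(\tau_R,\alpha)=\frac{2\tau_R}{\alpha-2}\,{}_2F_1[1,1-\tfrac2\alpha;2-\tfrac2\alpha;-\tau_R]$, i.e.\ the stated form.

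The main obstacle is the careful bookkeeping of the conditioning: one must verify that on each association event the serving BS of that tier is its nearest BS, that the interfering process on the complement of the exclusion ball is still the original PPP of the same intensity, and that the cross-tier constraint --- involving an independent PPP --- leaves the interference law unchanged. A secondary caveat worth stating is that $f_{\hat X_R}$ in \textbf{Lemma~\ref{lm:Distance_distribution}} is itself only an approximation (through the linearization factor $\mu$), so $\mathbb{P}_{C_{\mathrm{R}}}$ --- and hence the total coverage $\mathbb{C}$ in \eqref{eq:Total-Cov} --- inherits that approximation, whereas $\mathbb{P}_{\mathrm{C_T}}$ is exact under the model assumptions.
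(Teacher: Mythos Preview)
Your proposal is correct and follows essentially the same approach as the paper, which in fact does not give a separate proof of this lemma: the chain of equalities in \eqref{eq:CondistionRF-cove} and the closed form of $\mathcal{L}_{I^R_{\mathrm{agg}}}$ are simply stated, with the PGFL/hypergeometric evaluation left implicit as a standard result. Your write-up supplies the missing justifications --- why the serving BS in each tier is the nearest one, why the cross-tier conditioning (on the independent PPP) leaves the in-tier interference law intact, and how the integral reduces to ${}_2F_1$ --- and your remark that $\mathbb{P}_{C_{\mathrm{R}}}$ inherits the $\mu$-approximation from $f_{\hat X_R}$ is a valid caveat the paper does not make explicit.
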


\begin{remark} 
The  coverage probability of hybrid RF/THz scheme can be given as 
$
\mathbb{C}_{\mathrm{Hybrid}} = 1 - \left(1-\mathcal{P}_{T}\right)\left(1-\mathcal{P}_{R}\right),
$
where $\mathcal{P}_{T}$ is given in Eq. (4), $\mathcal{P}_{R}=2 \pi \lambda_R\int_{0}^{\infty}\rho
 \mathcal{L}_{I^R_{agg}}(\tau_{R}P^{-1}_{R}\gamma^{-1}_{R}\rho^\alpha) \exp(-\pi \lambda_R \rho^2) d\rho$ is the coverage probability of the RF-only network and $\mathcal{L}_{I^R_{agg}}(\cdot)$ is given in {\bf Lemma~4}.
\end{remark}

\section{Numerical Results and Discussions}
\begin{figure*}
\centering
\begin{tabular}{lccccc}
\includegraphics[width=6cm,height=3.25cm]{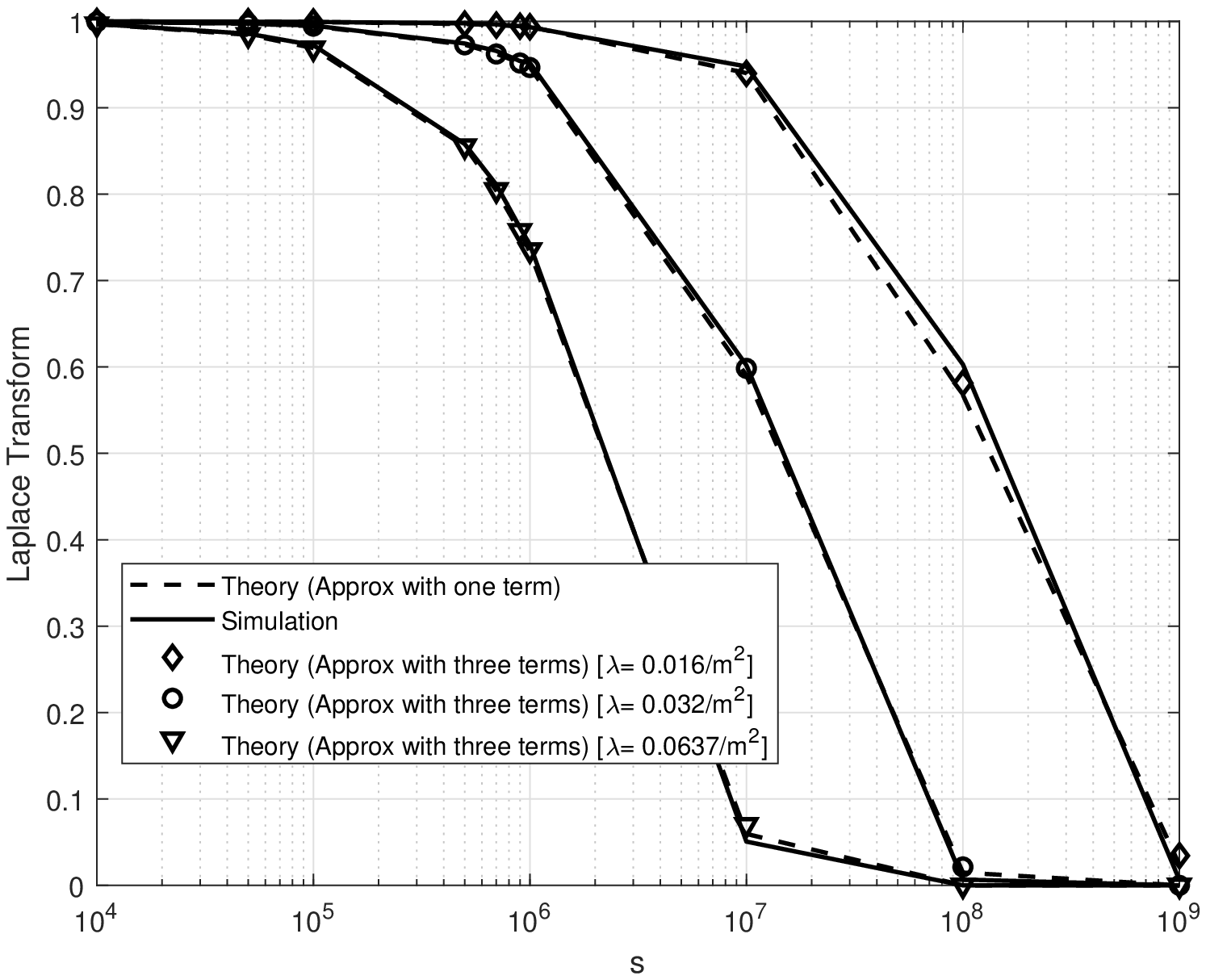}&\hspace{-1cm}
\includegraphics[width=6cm,height=3.25cm]{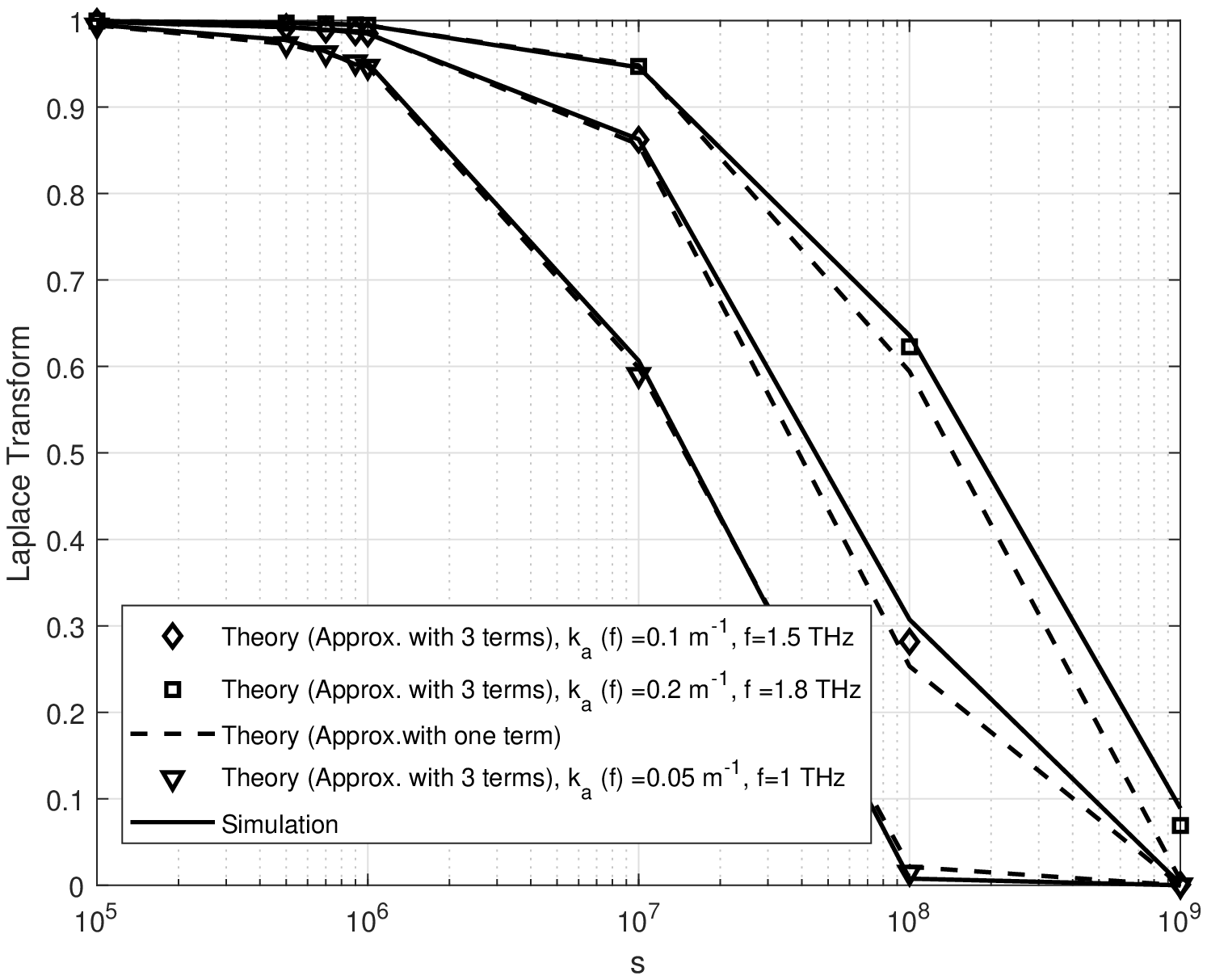}&\hspace{-1cm}
\includegraphics[width=6cm,height=3.25cm]{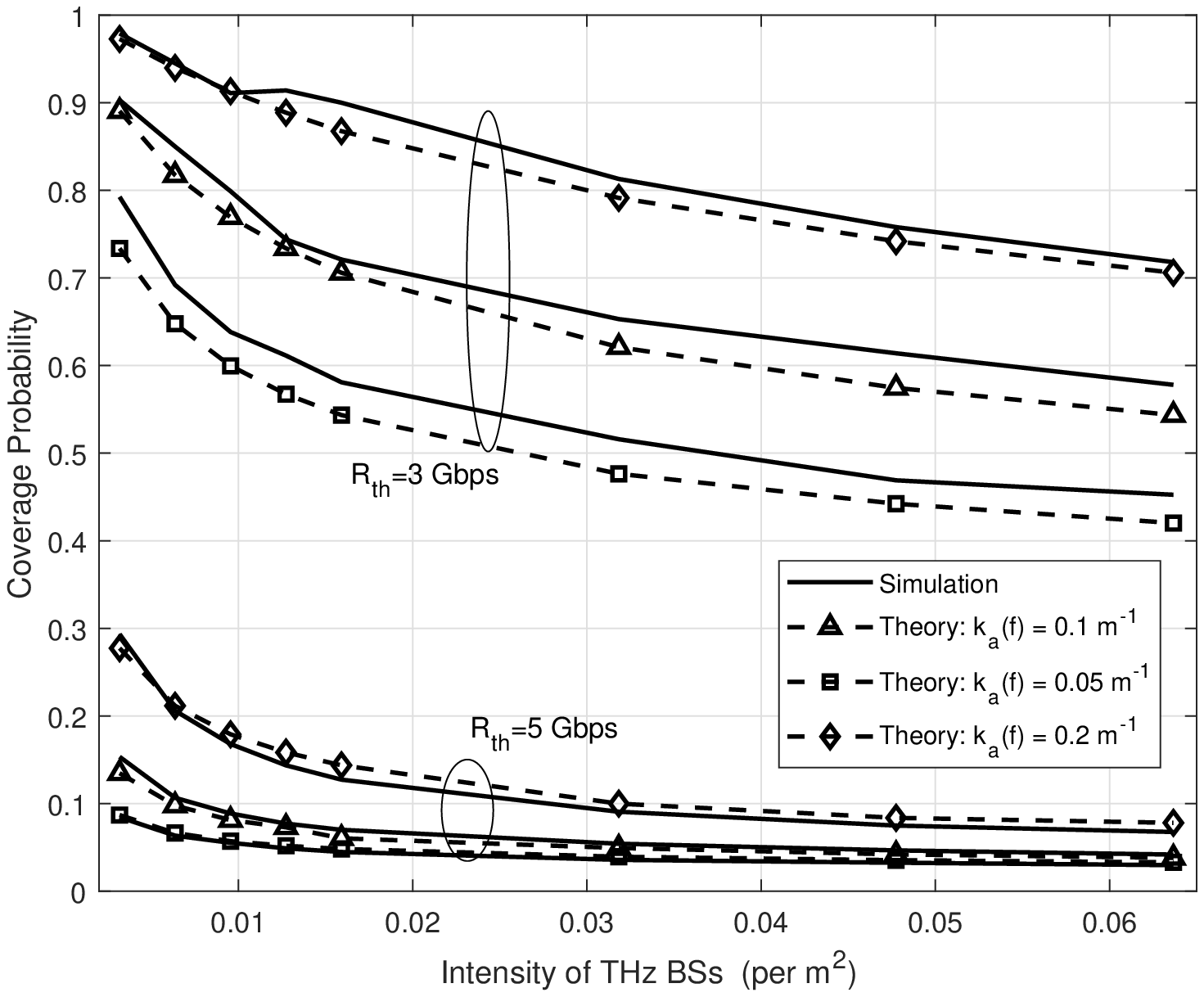}&\\
\qquad \qquad\qquad \qquad (a)  & (b) & (c) 
\end{tabular}
\caption {(a) LT of the aggregate interference as a function of the intensity of TBSs, $k_a(f)=0.05$, $f=$1.0 THz. (b) LT of the aggregate interference as a function of the molecular absorption coefficients, $\lambda_T=0.032$ per m$^2$. (c) Coverage probability of a  user in THz-only network.} 
\label{fig1}
\end{figure*}

\begin{figure*}
\vspace{-5mm}
\centering
\begin{tabular}{lccccc}
\includegraphics[width=6cm,height=3.25cm]{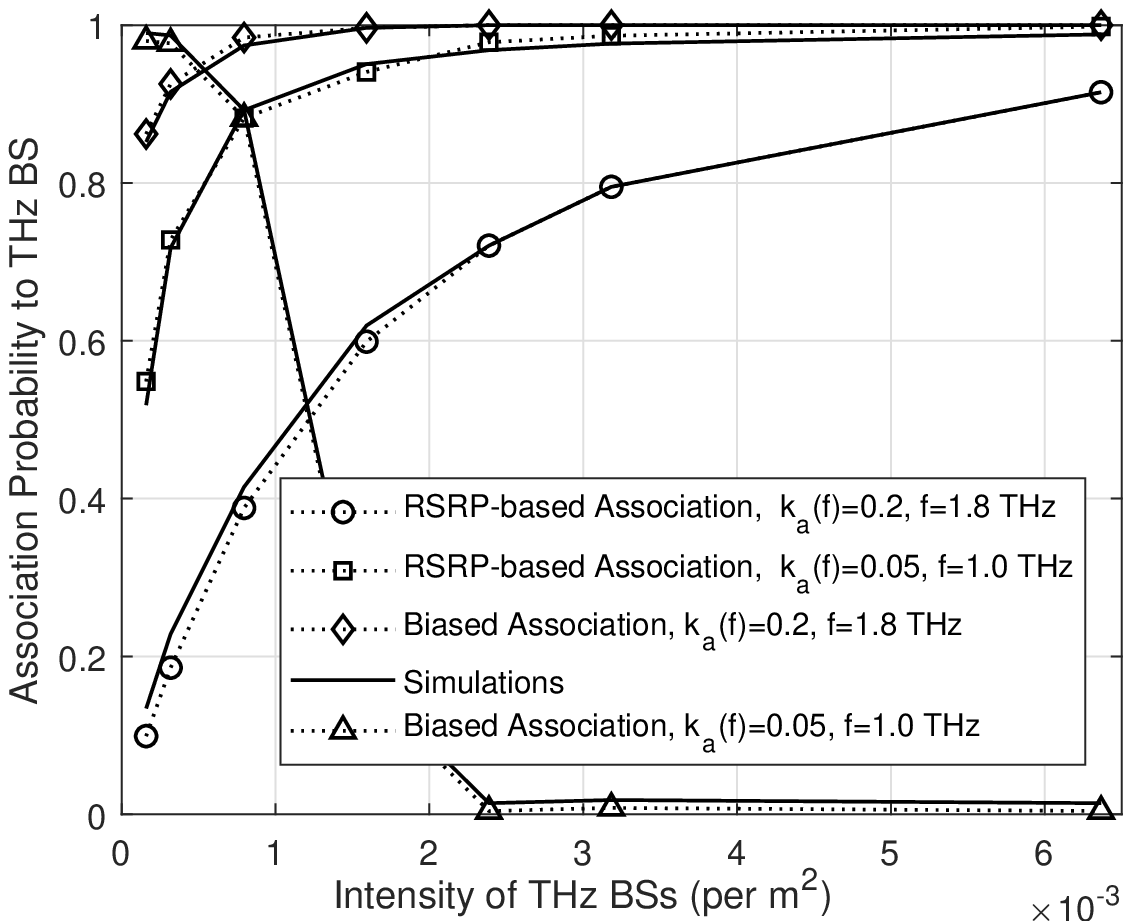}\hspace{-1cm}&\hspace{-1.0cm}
\includegraphics[width=6cm,height=3.25cm]{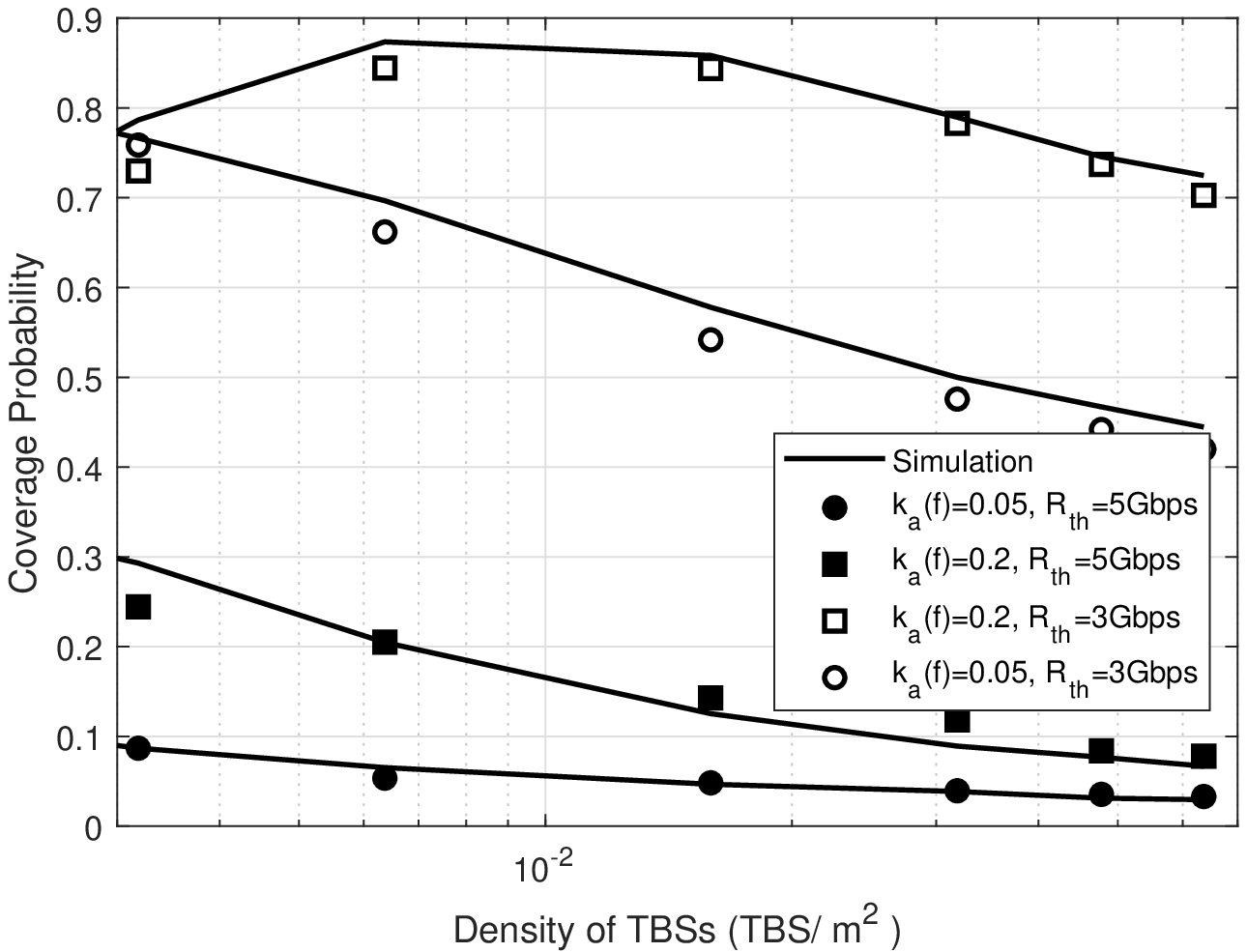}\hspace{-1cm}&\hspace{-01.0cm}
\includegraphics[width=6cm,height=3.25cm]{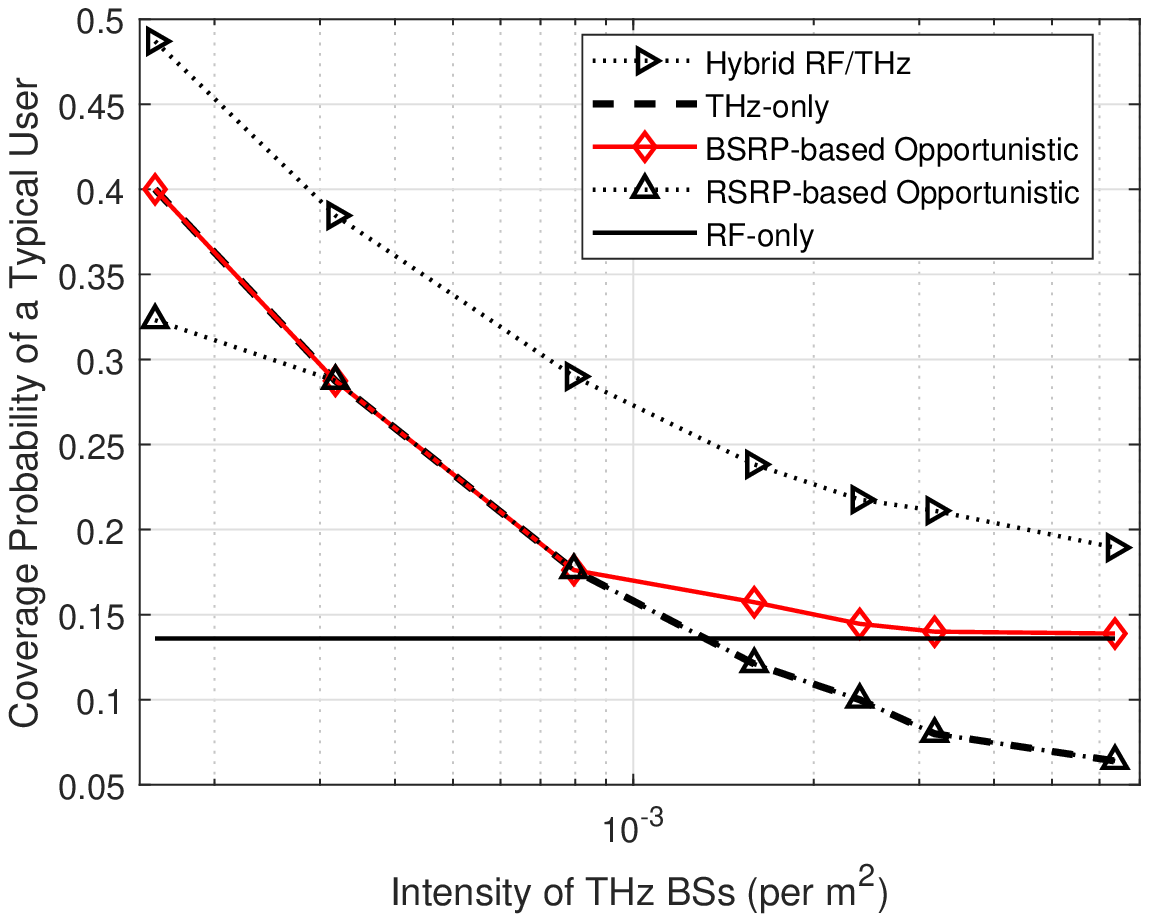}&\\
\qquad \qquad\qquad \qquad (a)  & (b) & (c) 
\end{tabular}
\caption {(a) Association probability as a function of the intensity of TBSs, $k_a(f)=0.2$ $m^{-1}$, $f=$1.8 THz,  $G_{T}^{T}$ and $G_{R}^{T}$ = 15 dB, $\alpha$=3.6, $B$=[1000, 100, 1, 0.001, 0.0001, 0.0001, 0.00001] when $k_a(f)=0.05$m$^{-1}$, $B$=[$10^6,  10^5, 10^4,  10^4, 10^3, 10^3, 10^3]$ when $k_a(f)=0.2$m$^{-1}$. (b) Coverage probability in coexisting network, $\lambda_T=0.1$ m$^{-2}$. (c) Coverage probability in coexisting network, $k_a(f)=0.05$m$^{-1}$, $B$=[$10^3,  10^2, 1,  10^{-3}, 10^{-4}, 10^{-4}, 10^{-5}]$.} 
\label{fig2}
\end{figure*}

Unless stated otherwise, the simulation parameters are listed herein. Users are distributed within a circular disc of radius $100$ m. The antenna gains of TBSs $G_{\mathrm{tx}}^T$ and $G_{\mathrm{rx}}^T$ are set as 25 dB. The transmit powers of TBSs and RF SBSs are   1 W. Three values for  $k_{a}\left(f\right)$ are considered, i.e., 0.05, 0.1, and 0.2 m$^{-1}$ with 1\% of water vapor molecules. These absorption values are chosen from the realistic database and their corresponding central frequencies are $1.0$ THz, $1.5$ THz, and $1.8$ THz, respectively \cite{jornet2011channel}  \cite{rothman2009hitran}. The desired rate threshold is taken as 5 Gbps. The RF transmission frequency is set as $2.1$ GHz and  $\alpha$=2.5. The RF and THz transmission bandwidths are set as 40 MHz and 0.5 GHz, respectively. The intensity of RF SBS $\lambda_R$ is set as 0.0001 BSs/m$^2$.

\textcolor{black}{Fig.~\ref{fig1}(a) depicts the LT of the aggregate interference at the typical user (averaged over large number of realizations of the desired link distance $r$). 
The theoretic results are calculated by taking the first three terms as well as only one term  of the summation in \textbf{Lemma~1}. There is a close match between the theory and simulations. 
For a given $s$, increasing LT values mean that the aggregate interference is reducing and vice versa.  
In Fig.~1(a), for a given $s$, we note that increasing the intensity of TBSs,  LT decreases rapidly (which implies interference increases). Similarly, for a given $s$, we note that increasing the $k_{a}(f)$ in Fig.~1(b), LT increases (which implies interference decreases due to lower absorption loss at the interfering links). } 
The coverage probability of a user in a THz-only network is demonstrated in  Fig.~\ref{fig1}(c). The theoretical results (with first three terms of the infinite summation) show a close match with the simulations. 
We note  that by increasing the molecular absorption coefficient, the coverage probability increases. This is in agreement with in Fig. 1(b).

The association probability to the THz layer is depicted in Fig.~\ref{fig2}(a). For the RSRP-based association,  the probability of user association to the THz layer increases with $\lambda_T$. However, in BRSP, the bias factor $B_T$  is obtained numerically to maximize the coverage probability. 
\textcolor{black}{We note that optimal bias to TBSs  decreases with the increase in $\lambda_T$ (which implies increased THz interference). Also,  bias reduction is steep for low $k_a(f)$ (implying higher interference), whereas the reduction in bias is gradual for high values of $k_a(f)$ (implying lower interference).}
Fig.~\ref{fig2}(b) shows the coverage probability of opportunistic RF/THz system with unit bias value.  From the Fig.~\ref{fig2}(a), it is clear that the typical user is likely to associate to the THz layer when bias is unity. Therefore, the total coverage probability is dominated by the behaviour of THz layer and shows the similar behaviour as THz-only network. 

Fig.~\ref{fig2}(c) compares the performance of the typical user in coexisting RF/THz network with  THz-only, RF-only, and hybrid RF/THz networks. Hybrid RF/THz network outperforms all networks, since the typical user simultaneously uses both THz and RF transmissions. That is, the additional coverage is at the expense of increased network resources. Coexisting RF/THz with BRSP-based association maximizes the coverage by dynamically adapting to the best tier (since the bias factors are chosen to maximize the coverage) and outperforms RF-only and THz-only schemes. 

\section{Conclusion}

We presented a unified stochastic geometry framework  to characterize the performance of  a user in  a coexisting RF/THz network.
\textcolor{black}{This work can be extended to incorporate fading by rederiving the LT of the aggregate interference with fading channel statistics.} For blockages, we can follow the approach in \cite{bai2014analysis}. A Boolean blockage model can be considered where the number of blockages in a link are Poisson distributed. Then, LOS probability $ e^{\left[-\left(\xi r + p\right)\right]}$ (where $\xi$ and $p$ are constants)  can be  multiplied  with $\Phi_\Omega|r (\omega)$.



\begin{appendices}
\renewcommand{\theequation}{A.\arabic{equation}}
\setcounter{equation}{0}
\renewcommand{\thesectiondis}[2]{\Alph{section}:}
\section{Proof of Lemma \ref{lm:Laplace_Trans}}
Recall that $I^T_{\mathrm{agg}} =  \sum_{\mathrm{i} \in \Phi_{T} \backslash 0}P_{\mathrm{T}}D_{i} h\left(r_{i}\right)$, after averaging over $D_{i}$ the LT of the aggregate interference can be given as:
\small
\begin{equation*}
\label{eq:Laplace_Trans-Proof}
\begin{split}
&\mathcal{L}_{I^{\mathrm{T}}_{\mathrm{agg}}}(s) = \mathbb{E}_{\Phi_{T}}\left[e^{-sI_{\mathrm{agg}}}\right] 
= \mathbb{E}_{\Phi_{T}}\Biggl[e^{-s F\sum_{i \in \Phi_{T}\backslash 0} P_{\mathrm{T}} \gamma_{T} \frac{e^{-k_{a}\left(f\right)r_i}}{ r_i^2} }\Biggr],
\\&=
\mathbb{E}_{\Phi_{T}}\Biggl[\prod_{i \in \Phi_{T}\backslash 0}\exp\biggl(-s F  P_{\mathrm{T}} \gamma_{T} \frac{e^{-k_{a}\left(f\right)r_i}}{ r_i^2}\Biggr)\Biggr],
\\& 
\stackrel{(a)}{=} \exp\Biggl(-2\pi\lambda_{T}\int_{r}^{\infty}r_i\left(1- \exp{(-s \gamma_{T} F P_{\mathrm{T}} \frac{e^{-k_{a}\left(f\right)r_i}}{ r_i^2})}\right)dr_i\Biggr),
\\&\stackrel{(b)}{=}
\exp\Biggl(-2\pi\lambda_{T}\int_{r}^{\infty}\sum_{l=1}^{\infty}\frac{\left(-s \gamma_{T} F P_{\mathrm{T}} \right)^l \exp\left(-l k_{a}\left(f\right)r_i\right)}{r_i^{2l-1} l!} 
dr_i\Biggr), \\ 
&\stackrel{(c)}{=}\exp\Biggl(2\pi\lambda_{T}\sum_{l=1}^{\infty}\frac{\left(-s \gamma_{T} F P_{\mathrm{T}}\right)^{l}}{\left(lk_{\alpha}\left(f\right)\right)^{2-2l}l!}
\Gamma\left(2-2l,lk_{a}\left(f\right)r)\right)\Biggr),
\end{split}
\end{equation*}
\normalsize
where (a) is derived by using the probability generating functional (\textbf{PGFL}) with respect to $f(x)=\exp\left(-sP_{T}h\left(r_{i}\right)\right)$, (b) is derived using $\exp\left(-x\right)=\sum_{i=0}^{\infty}(-1)^{i}\frac{x^{i}}{i!}$ (\cite{gradshteyn2014table}, Eq. 1.211), and (c) follows from the {integral identity} $\int \frac{\exp\left(-\beta x^n\right)}{x^m}dx = - \frac{\beta^z\Gamma\left(-z,\beta x^n\right)}{n}$, and $z$ equals to $\frac{m-1}{n}$ (\cite{gradshteyn2014table}, Eq. 2.345). \textcolor{black}{Since the typical user has a distance $r$ from its serving TBS due to the nearest BS association, all interferers exist beyond $r$. Thus, the lower limit in the integral is $r$.} 
   
\section{Proof of Lemma \ref{lm:Assocation_probability}}
\renewcommand{\theequation}{B.\arabic{equation}}
\setcounter{equation}{0}
The distribution of the distances between the typical user and its nearest THz and RF BSs are  $f_{r}(r) = 2\pi \lambda_{T}r\exp\left(-\pi\lambda_{T}r^2\right)$ and $f_{\rho}(\rho) = 2\pi \lambda_{R} \rho \exp\left(-\pi\lambda_{R} \rho^2\right)$, respectively. Thus, averaging over $r$ in (\ref{eq:Asso_Probab_Proof}) yields the association probability\footnote{\textcolor{black}{User association to a BS is a slowly varying process that relies on long-term channel propagation factors such as path-loss and shadowing.}} with TBS as: 
\small
\begin{equation}
\label{eq:Asso_Probab_average}
\begin{split}
&\mathcal{P}_{A_{T}} = \int_{0}^{\infty} 
 \exp\left(-\pi\lambda_{R}\left({K r^2}\right)^{\frac{2}{\alpha}}\exp\left(\frac{2k_{a}\left(f\right)r}{\alpha}\right)\right) f_r(r) dr,
\\ & \stackrel{(a)}{=} \int_{0}^{\infty} 2\pi \lambda_{T}\alpha h^{2\alpha-1} e^{-\pi\lambda_{T}h^{2\alpha}} 
\exp\left(-\pi\lambda_{R}K^{\frac{2}{\alpha}}h^{4} e^{\frac{2k_{a}\left(f\right) h^\alpha}{\alpha}}\right) dh,
\\ & \stackrel{(b)}{=} \int_{0}^{\infty} 2\pi\lambda_{T}\alpha h^{2\alpha-1} e^{-\pi\lambda_{T}h^{2\alpha}} 
\sum_{j=0}^{\infty} \frac{\left(-\pi\lambda_{R}K^{\frac{2}{\alpha}}h^4 e^{\frac{2k_{a}\left(f\right)}{\alpha}h^\alpha}\right)^j}{j!}
dh,
\nonumber\\ & \stackrel{(c)}{=} \sum_{j=0}^{\infty} \frac{ \left(-\pi\lambda_{R}K^{\frac{2}{\alpha}}\right)^j}{j!} \int_{0}^{\infty} 2\pi\lambda_{T} z^{\frac{4j+\alpha}{\alpha}}
e^{-\pi\lambda_{T}z^2 + \frac{2jk_{a}\left(f\right)}{\alpha} z } dz,
\\ & = 
\sum_{j=0}^{\infty} \frac{\left(-\pi\lambda_{R}K^{\frac{2}{\alpha}}\right)^j}{j!} \int_{0}^{\infty} 2\pi\lambda_{T}  z^{v_{j}-1} e^{-\beta z^2 - \eta_{j} z } dz,
\end{split}
\end{equation}
\normalsize
\noindent where (a) is derived by changing variables $r=h^\alpha$, (b) follows from expanding the exponential function as $\exp\left(-x\right)=\sum_{i=0}^{\infty}(-1)^{i}\frac{x^{i}}{i!}$ (\cite{gradshteyn2014table}, Eq. 1.211),  (c) follows from the variable change $z=h^\alpha$, and, finally, \textbf{Lemma~2} is derived by using the integral identity $\int_{0}^{\infty} x^{\nu-1}e^{-\beta x^2-\eta x} dx = \left(2\beta\right)^{-\frac{\nu}{2}}\Gamma\left[\nu\right]\exp\left(\frac{\eta^2}{8\beta}\right)D_{-\nu}\left(\frac{\eta}{\sqrt{2\beta}}\right)$ (\cite{gradshteyn2014table}, Eq. 3.462).

\section{Proof of Lemma~3}
\renewcommand{\theequation}{C.\arabic{equation}}
\setcounter{equation}{0}
The distribution of the distance from the tagged BS in the tier $k$ where $k = \{{\rm THz, RF}\}$ can be derived as follows:
\begin{equation}
\label{eq:Distance_distribution_Proof}
\begin{split}
f_{\hat{X}_{k}}\left(\hat x\right) =  &\frac{d\mathrm{P{r}}[\hat{X}_k> \hat x]}{d\hat x} 
= \frac{d\mathrm{P{r}}[X_k>\hat x \vert k = n]}{d\hat x} \\=  &  \frac{d\mathrm{P{r}}[{X}_k>\hat x, k = n]}{\mathrm{P{r}}[k = n]d\hat x},
\end{split}
\end{equation}
where $n \in \{{\rm THz, RF}\}$ is the index of the layer to which a user will associate.  
${\mathrm{P{r}}[k = n]}$ is the association probability of a user to tier $k$ as given in \textbf{Lemma~\ref{lm:Assocation_probability}}. When the user associates to the TBS, the numerator in \eqref{eq:Distance_distribution_Proof} can be given as:
\small
\begin{equation}
\label{eq:Distance_distribution_Proof2}
\begin{split}
&\mathrm{P{r}}[ X_T >\hat x\vert k = {\rm THz}] = \mathrm{P{r}}[ X_T > \hat x,  B_{T}P_r^{\mathrm{THz}} > P_r^{\mathrm{RF}}], \\& =   \int_{\hat x}^{\infty}\mathrm{P{r}}[B_{T}P_r^{\mathrm{THz}} > P_r^{\mathrm{RF}}] f_{X_T}(x) dx, \\& \stackrel{(a)}{=} \int_{\hat x}^{\infty}2\pi \lambda_{T}x e^{ -\pi\lambda_{R} K^{2/\alpha} x^{4/\alpha}e^{2 k_{a}\left(f\right)x/\alpha } -  \pi\lambda_{T}x^2}dx,
\end{split}
\end{equation}
\normalsize
where (a) is derived by substituting $\mathrm{P{r}}[P_r^{\mathrm{THz}} > P_r^{\mathrm{RF}}]$ provided in \textbf{Appendix B}, and $f_{X_T}(x) = 2\pi \lambda_{T}x\exp\left(-\pi\lambda_{T}x^2\right)$. Now substituting \eqref{eq:Distance_distribution_Proof2} in \eqref{eq:Distance_distribution_Proof} results in $f_{\hat{X}_{T}}\left(\hat x\right)$. Likewise, when the user associates to the RF layer, we have: 
\small
\begin{equation}
\label{eq:Distance_distribution_Proof3}
\begin{split}
 & \mathrm{P{r}}[ X_R > \hat x\vert k = {\rm RF}] = \mathrm{P{r}}[ X_R>\hat x, P_r^{\mathrm{RF}} > B_{T}P_r^{\mathrm{THz}}],\\& = \int_{\hat x}^{\infty}\mathrm{P{r}}[P_r^{\mathrm{RF}} > B_{T} P_r^{\mathrm{THz}}] f_{X_{R}}(x) dx,
 \\& = \int_{\hat x}^{\infty}\mathrm{Pr}\left[\pi r^{2} \exp\left(k_{a}\left(f\right)r\right) > K x^{\alpha} \right] f_{X_{R}}(x) dx,
 \\& \stackrel{(a)}{\approx} \int_{\hat x}^{\infty}\mathrm{Pr}\left[r > \left(\frac{K x^{\alpha}}{\pi}\right)^{\frac{1}{2+\mu}}\right] f_{X_{R}}(x) dx, \\& = \int_{\hat x}^{\infty} 2\pi\lambda_{R} x \exp\left(-\pi\lambda_{T}\left(\frac{K x^{\alpha}}{\pi}\right)^{\frac{1}{2+\mu}} - \pi\lambda_{R}x^2\right) dx,
\end{split}
\end{equation}
where (a) is derived by approximating $r^2\exp\left(k_{a}\left(f\right)r\right)$ with $r^{2 + \mu}$, and $\mu$ is a correcting factor. That is, when $k_{a}\left(f\right) > 0.1$ than  $\mu=2+\frac{10k_{a}\left(f\right)}{1+2k_{a}\left(f\right)}$, otherwise $\mu=2+\frac{15k_{a}\left(f\right)}{1+10k_{a}\left(f\right)}$. Finally, substituting \eqref{eq:Distance_distribution_Proof3} in \eqref{eq:Distance_distribution_Proof} results $f_{\hat{X}_{R}}\left(\hat x\right)$. 
\end{appendices}
\bibliographystyle{IEEEtran}
\bibliography{References}
\end{document}